\documentclass[11pt]{article}
\usepackage[margin=1in]{geometry}
\usepackage{amsmath,amssymb,amsthm,mathtools}
\usepackage{enumitem}
\usepackage{booktabs}
\usepackage[hidelinks]{hyperref}
\usepackage[capitalise,nameinlink]{cleveref}

\newtheorem{theorem}{Theorem}
\newtheorem{proposition}[theorem]{Proposition}

\newtheorem{corollary}[theorem]{Corollary}
\newtheorem{fact}{Fact}

\theoremstyle{definition}
\newtheorem{definition}[theorem]{Definition}
\newtheorem{example}[theorem]{Example}
\newtheorem{conjecture}[theorem]{Conjecture}
\theoremstyle{remark}
\newtheorem{remark}[theorem]{Remark}
\crefname{fact}{Fact}{Facts}

\newcommand{\C}{\mathbb{C}}
\newcommand{\N}{\mathbb{N}}
\newcommand{\Hol}{\operatorname{Holant}}
\newcommand{\Tr}{\operatorname{Tr}}
\newcommand{\soc}{\operatorname{soc}}
\newcommand{\G}{\mathcal{G}}
\newcommand{\cross}{\mathbin{\bowtie}}

\title{Generalized Fibonacci gates are Frobenius algebras:\\
a proof of Liu's conjecture\\[4pt]
\large Summary of results}
\author{Aaron Sterling\\ \small Thistleseeds \quad \texttt{asterling@thistleseeds.com}}
\date{Draft, October 2026}

\begin{document}
\maketitle

\begin{abstract}
Liu introduced generalized Fibonacci gates on domains of size 3 and 4, proved that
Holant problems built from them are tractable by a cut-and-merge algorithm, and
conjectured that Fibonacci gates with matching polynomial-time algorithms exist on all
higher domains. We extend the definition uniformly to every domain size $k$ and show
that a family of domain-$k$ generalized Fibonacci gates is the same thing as a
$k$-dimensional commutative Frobenius (equivalently, Artinian Gorenstein) algebra $A$
equipped with an orthonormal basis containing $1$. Under this dictionary Liu's
parameter constraints are exactly associativity of $A$, a gate is an element $w\in A$,
merging two gates is multiplication in $A$, and the Holant value of a connected graph
$G$ is $\tau\bigl(h^{\,|E|-|V|+1}\prod_v w_v\bigr)$ for a fixed ``handle'' element
$h\in A$. This proves Liu's conjecture for every $k$, with an algorithm whose cost is
polynomial in $k$ as well as in the graph. We also give a closure criterion for linear recurrences of
any depth (a balance condition implies closure, and is necessary when the unit is a
combination of the colours), and we map the landscape: semisimple algebras are exactly
Liu's orthogonal frames, smoothable algebras with nilpotents are collided frames, and families whose algebra is not smoothable (``frameless'' families)
exist exactly when $k\ge 14$. Finally we show that every non-semisimple part of a
family is invisible to the Holant on connected graphs with two or more independent cycles.
This note states the results and sketches the proofs; the
proofs of the main theorems are short and are given essentially in full.
\end{abstract}

\section{Introduction}

\paragraph{Background.}
On the Boolean domain, Cai, Lu and Xia~\cite{CLX08} introduced \emph{Fibonacci gates}:
symmetric signatures $[f_0,\dots,f_n]$ with $f_{j+2}=m f_{j+1}+f_j$. Holant problems over
Fibonacci gates with a common parameter $m$ are solvable in polynomial time by a direct
combinatorial algorithm. Liu~\cite{Liu24} extended the definition to domains of size $3$
and $4$ (his Definitions~1 and~2), by asking each ``medium-sized'' depth-$2$ triangle
(tetrahedron) of signature values to satisfy a linear relation whose parameters obey
polynomial constraints. He proved tractability by cut-and-merge (his Theorems~1 and~2)
and wrote:
\begin{quote}
\emph{``I conjecture that there are Fibonacci Gates and the associated P-time algorithms
on all higher domains.''}
\end{quote}

\paragraph{What this note does.}
\begin{enumerate}[leftmargin=*,label=(\arabic*)]
\item We give the natural definition of generalized Fibonacci gates on any domain
$[k]$ (\cref{def:liu-k}). For $k=3$ it is Liu's Definition~1 verbatim. For $k=4$ it is
Liu's Definition~2 with his relation~(8) included among the constraints; his proof of
Theorem~2 uses~(8) (see \cref{rem:eq8}).
\item \textbf{Dictionary} (\cref{thm:dictionary}). Domain-$k$ generalized Fibonacci
families are in bijection with triples $(A,\tau,\varphi)$, where $A$ is a
$k$-dimensional commutative Frobenius algebra with Frobenius form $\tau$ and $\varphi$ is
a $\tau$-orthonormal basis of $A$ with $\varphi_1=1$. Liu's recurrence coefficients are the structure constants of $A$, and his
constraints are associativity.
\item \textbf{Merge is multiplication} (\cref{thm:merge}). Gates are elements of $A$;
cross-component merge is the product in $A$; same-component merge is multiplication by
a fixed element $h$. The Holant has a closed form (\cref{cor:holant}). This settles
Liu's conjecture for all $k$ (\cref{cor:liu}).
\item \textbf{Generalized Liu Conjecture} (our generalization; \cref{thm:glc}). For
linear-recurrence classes of any depth, a single balance condition implies closure under
cut-and-merge, and it is necessary whenever $1\in\operatorname{span}(\varphi_c)$ (as in
Liu's normalization). When the merge pairing is nondegenerate, balance forces Gorenstein
and depth $\le2$, so
Liu's depth-$2$ definition already captures every nondegenerate case.
\item \textbf{Landscape} (\cref{sec:landscape}). Semisimple $A$ gives exactly the
orthogonal-frame gates that Liu starts from. Non-reduced smoothable $A$ gives collided
frames.
Non-smoothable $A$ gives frameless families, which exist if and only if $k\ge14$
(using Casnati--Jelisiejew--Notari, Iarrobino--Emsalem, and
Biaggi--Draisma--Seynnaeve). On connected graphs with at least two independent cycles,
only the semisimple part of $A$ contributes to the Holant (\cref{prop:vanishing}).
\end{enumerate}

\Cref{tab:dictionary} summarizes the dictionary.

\begin{table}[ht]
\centering
\begin{tabular}{@{}ll@{}}
\toprule
\textbf{Holant / Liu side} & \textbf{Algebra side} \\
\midrule
domain size $k$ & $\dim A$ \\
first colour (Red) & unit $1\in A$ \\
recurrence parameters ($s,x,y,t$ at $k=3$) & structure constants $\tau(\varphi_l\varphi_m\varphi_n)$ \\
apex coefficients $1$ (doubled) / $0$ (mixed) & orthonormality of the basis $\varphi$ \\
Liu's parameter constraints & associativity of $A$ \\
a gate of arity $n$ & an element $w\in A$, via $F^w_n$ \\
the $k$ ``top'' values $g(1^{n-1}c)$ & the coordinates of $w$ in the basis $\varphi$ \\
cross-component merge & multiplication $w\cdot z$ \\
same-component merge & multiplication by $h=\sum_c\varphi_c^2$ \\
Holant of a connected graph & $\tau(h^{\,|E|-|V|+1}\prod_v w_v)$ \\
orthogonal frame (Liu \S2, Valiant) & semisimple $A\cong\C^k$ \\
collision of frame vectors & nilpotents (non-reduced, smoothable $A$) \\
frameless family & non-smoothable $A$; requires $k\ge14$ \\
\bottomrule
\end{tabular}
\caption{The dictionary. Here $\tau$ is the Frobenius form of $A$.}
\label{tab:dictionary}
\end{table}

\section{Holant problems, signatures and merges}\label{sec:holant}

Fix a domain $[k]=\{1,\dots,k\}$ and work over $\C$. A symmetric signature $f$ of arity $n$
is a function $[k]^n\to\C$ invariant under permuting its arguments. Equivalently, it is a
function $f(\gamma)$ of the colour-count vector $\gamma\in\N^k$, $|\gamma|=n$, where
$\gamma_c$ is the number of inputs coloured $c$. A signature grid assigns a signature
$f_v$ of arity $\deg v$ to each vertex of a multigraph $G=(V,E)$ (loops allowed), and
\[
\Hol(G)=\sum_{\sigma:E\to[k]}\ \prod_{v\in V} f_v\bigl(\sigma|_{E(v)}\bigr).
\]
Liu's algorithm cuts every edge into two dangling half-edges and then reconnects them one
at a time. There are two operations~\cite[Figs.~1--2]{Liu24}:
\begin{itemize}[leftmargin=*]
\item \emph{Cross-component merge.} For gates $F$ (arity $n$) and $G$ (arity $m$),
\[(F\cross G)(x_1,\dots,x_{n-1},y_1,\dots,y_{m-1})=\sum_{c\in[k]}F(x_1,\dots,x_{n-1},c)\,
G(y_1,\dots,y_{m-1},c).\]
\item \emph{Same-component merge.} For a gate $F$ of arity $n$,
$F^{\circ}(x_1,\dots,x_{n-2})=\sum_{c\in[k]}F(x_1,\dots,x_{n-2},c,c)$.
\end{itemize}
After all edges are reconnected we have an arity-$0$ gate, and its value is $\Hol(G)$. A
class of symmetric signatures is \emph{closed under cut-and-merge} if both operations keep
it inside the class. Closure, together with a fixed-size description of each gate,
gives a polynomial-time algorithm.

\section{Generalized Fibonacci gates on every domain}\label{sec:liu}

\subsection{Liu's definition, and its uniform extension}
Write a domain-$k$ signature as $g(\gamma)$ with $\gamma=(\gamma_1,\dots,\gamma_k)$, so
$\gamma_1$ counts the first colour. Liu's domain-3 relations~\cite[Eq.~(3)]{Liu24} read
\begin{align*}
g(\gamma-2e_1+2e_2)&=g(\gamma)+s\,g(\gamma-e_1+e_2)+x\,g(\gamma-e_1+e_3),\\
g(\gamma-2e_1+e_2+e_3)&=\phantom{g(\gamma)+{}}x\,g(\gamma-e_1+e_2)+y\,g(\gamma-e_1+e_3),\\
g(\gamma-2e_1+2e_3)&=g(\gamma)+y\,g(\gamma-e_1+e_2)+t\,g(\gamma-e_1+e_3),
\end{align*}
with $sy+xt+1=x^2+y^2$. Two features generalize. The ``apex'' coefficient is $1$ for a
doubled colour and $0$ for mixed colours. The bottom coefficients are shared in a
totally symmetric pattern: $T_{222}=s$, $T_{223}=x$, $T_{233}=y$, $T_{333}=t$, where
$T_{mm'l}$ is the coefficient of $g(\gamma-e_1+e_l)$ in the relation for
$g(\gamma-2e_1+e_m+e_{m'})$. Liu's domain-4 relations~(6) have the same two features,
with ten parameters forming a symmetric tensor on $\{2,3,4\}$.

\begin{definition}[generalized Fibonacci gate, domain $k$]\label{def:liu-k}
Let $T=(T_{lmn})_{l,m,n=2}^{k}$ be a totally symmetric array of complex numbers. A
symmetric signature $g$ on $[k]$ of arity $\ge2$ is a \emph{generalized Fibonacci gate
with parameter $T$} if, for all $\gamma$ with $\gamma_1\ge2$ and all $m,m'\in\{2,\dots,k\}$,
\begin{equation}\label{eq:rec}
g(\gamma-2e_1+e_m+e_{m'})=\delta_{mm'}\,g(\gamma)+\sum_{l=2}^{k}T_{mm'l}\;g(\gamma-e_1+e_l).
\end{equation}
Every unary signature counts as a gate. The parameter $T$ is \emph{admissible} if, for
all $m,n,p,q\in\{2,\dots,k\}$,
\begin{equation}\label{eq:assoc}
\delta_{mn}\delta_{pq}+\sum_{l=2}^{k}T_{mnl}T_{lpq}
\;=\;\delta_{pn}\delta_{mq}+\sum_{l=2}^{k}T_{pnl}T_{lmq}.
\end{equation}
A set of signatures is \emph{generalized Fibonacci} if all of its members are gates with
one common admissible parameter.
\end{definition}

Relation~\eqref{eq:assoc} contains both of Liu's ``dot product'' rules, the one for his
quadrics in~(7)~\cite[p.~9]{Liu24} and the one for~(8)~\cite[p.~10]{Liu24}, written for
all $k$. The $\delta$'s are the apex entries $1/0$.

\begin{example}[$k=3$]
With $(m,n,p,q)=(2,2,3,3)$, \eqref{eq:assoc} reads $1+T_{222}T_{233}+T_{223}T_{333}
=T_{322}T_{223}+T_{323}T_{323}$, that is, $sy+xt+1=x^2+y^2$. This is Liu's constraint,
and every other instance of~\eqref{eq:assoc} at $k=3$ is trivial or equivalent to it.
\end{example}

\begin{remark}[$k=4$ and Liu's Equation~(8)]\label{rem:eq8}
A Gr\"obner-basis computation over $\mathbb{Q}[a,\dots,p]$ shows that, at $k=4$, the
ideal generated by the admissibility equations~\eqref{eq:assoc} \emph{equals} the ideal
generated by Liu's Equations~(7) and~(8) together. Equation~(8) is \emph{not} a
consequence of~(7), not even set-theoretically. The rational point
$(a,b,c,d,e,f,h,i,j,p)=(1,0,1,0,0,0,0,0,1,1)$ satisfies all four equations of~(7)
(including the cubic) but violates the first equation of~(8). At this point, merging
two generic binary gates in the sense of Definition~2 gives a signature that is not
even symmetric, so closure fails for Definition~2 as written. Liu's proof of Theorem~2
does invoke~(8) (``by Equations 7 and 8''), so it establishes Theorem~2 once~(8) is
added to Definition~2. Under our dictionary, (7)\,+\,(8) is exactly associativity.
\end{remark}

\subsection{Liu's conjecture, in precise form}
We read Liu's conjecture as follows.

\begin{conjecture}[Liu's Conjecture, precise form]\label{conj:liu}
For every $k\ge2$ there are generalized Fibonacci gates on $[k]$ (in the sense of
\cref{def:liu-k}). For every finite generalized Fibonacci set $\mathcal F$,
$\Hol(\mathcal F)$ is computable in polynomial time by cut-and-merge.
\end{conjecture}

Existence of \emph{orthogonal-frame} signatures on every domain was never in doubt.
They are orthogonal holographic transforms of weighted equalities
(\cite{Val08}; \cite[\S2]{Liu24}). The content of the conjecture is that the
\emph{combinatorial} definition, a recurrence plus polynomial constraints, extends to all
$k$ and stays closed under merging. As we will see, the class it defines is strictly
larger than the frame signatures.

\section{Frobenius (Artinian Gorenstein) algebras: the facts we use}\label{sec:facts}

All algebras are commutative, unital, finite-dimensional over $\C$. For $a\in A$ let
$L_a$ denote multiplication by $a$. The main theorems use only
\cref{f:structure,f:frobenius,f:casimir,f:orthonormal}, which are standard.
\Cref{f:macaulay,f:smooth} are used only in \cref{sec:landscape}.

\begin{fact}[Structure]\label{f:structure}
$A\cong\prod_i A_i$ with each $A_i$ local Artinian (via the primitive idempotents).
$A$ is reduced (no nonzero nilpotents) iff $A\cong\C^r$ (semisimple).
\end{fact}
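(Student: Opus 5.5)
The plan is to prove the two parts of \cref{f:structure} separately. Both are standard structure theory for finite-dimensional commutative $\C$-algebras. The main tool is idempotent lifting, or equivalently the Chinese Remainder Theorem applied to the finitely many maximal ideals.

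\emph{Product decomposition.} Since $\dim A<\infty$, $A$ is Artinian, so it has finitely many maximal ideals $\mathfrak m_1,\dots,\mathfrak m_r$. Its Jacobson radical $J=\bigcap_i\mathfrak m_i$ is nilpotent, say $J^N=0$.
\begin{enumerate}[leftmargin=*,label=(\roman*)]
\item The ideals $\mathfrak m_i^N$ are pairwise comaximal, because distinct maximal ideals are comaximal and comaximality passes to powers.
\item Hence $\prod_i\mathfrak m_i^N=\bigcap_i\mathfrak m_i^N$.
\item This intersection is contained in $J^N=0$: the product $\prod_i\mathfrak m_i^N=\bigl(\prod_i\mathfrak m_i\bigr)^N$ lies in $J^N$.
\item The Chinese Remainder Theorem then gives $A\cong\prod_i A/\mathfrak m_i^N$, and each factor $A_i=A/\mathfrak m_i^N$ is local Artinian with unique maximal ideal $\mathfrak m_i/\mathfrak m_i^N$.
\end{enumerate}
The images of $(0,\dots,1,\dots,0)$ are orthogonal idempotents $e_i$ with $\sum e_i=1$ and $A_i=e_iA$. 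They are primitive because a local ring has no idempotents other than $0$ and $1$: if $e^2=e$, then $e$ or $1-e$ is a unit, and a unit idempotent equals $1$.

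\emph{Reducedness.} If $A\cong\C^r$, then $A$ is clearly reduced. Conversely, suppose $A$ is reduced.
\begin{enumerate}[leftmargin=*,label=(\roman*)]
\item Each factor $A_i$ is reduced, since it is a direct factor of $A$.
\item In a local Artinian ring the maximal ideal is nilpotent. It therefore consists of nilpotents, so it is $0$, and $A_i$ is a field.
\item This field is finite-dimensional over the algebraically closed field $\C$, so it equals $\C$ (every element is algebraic over $\C$).
\end{enumerate}
Hence $A\cong\C^r$ with $r=k$ when $\dim A=k$. This is precisely semisimplicity, since for commutative Artinian rings semisimple is equivalent to $J=0$, which is equivalent to reduced.

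The only nontrivial input is the nilpotence of $J$ in an Artinian ring. I would cite Atiyah--Macdonald for it. Alternatively one can argue directly: the descending chain $J\supseteq J^2\supseteq\cdots$ stabilizes at some $J^N$, and if $J^N\neq0$ then Nakayama's lemma applied to $J\cdot J^N=J^N$ gives a contradiction. With that in hand, everything else is routine.
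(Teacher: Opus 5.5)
Your proof is correct. It is the standard Chinese Remainder Theorem proof of the structure theorem for commutative Artinian rings (Atiyah--Macdonald, Ch.~8). The paper gives no proof of this fact and simply quotes it as standard background, so your argument supplies exactly what it takes for granted. Your Nakayama shortcut for the nilpotence of $J$ is legitimate here because $J^N$ is a finite-dimensional subspace, hence a finitely generated $A$-module; this is the one place where you use finite-dimensionality rather than the Artinian property alone.
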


\begin{fact}[Frobenius $\Leftrightarrow$ Gorenstein]\label{f:frobenius}
The following are equivalent: (a) there is a linear $\tau:A\to\C$ such that
$(x,y)\mapsto\tau(xy)$ is nondegenerate (a \emph{Frobenius form}); (b) $A\cong A^{*}$
as $A$-modules; (c) every local factor $A_i$ has a one-dimensional socle (Artinian
Gorenstein). If $\tau_0$ is one Frobenius form, the Frobenius forms are exactly
$\tau_0(u\,\cdot)$ with $u\in A^\times$. A form $\tau$ is Frobenius iff $\ker\tau$
contains no nonzero ideal. See~\cite[\S21.2]{Eis95}, \cite[\S2.2]{Kock04}.
\end{fact}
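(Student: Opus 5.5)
The plan is to prove the equivalences in \cref{f:frobenius} in the cycle (a)$\Rightarrow$(b)$\Rightarrow$(a), then (a)$\Leftrightarrow$(c). Along the way we settle the kernel criterion and the description of all Frobenius forms.

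\textbf{Forms and module isomorphisms.} Any linear $\tau$ defines an $A$-module map $\Phi_\tau:A\to A^{*}$, $a\mapsto\tau(a\,\cdot)$. It is $A$-linear because $A$ is commutative. Since $\dim A=\dim A^*$, $\Phi_\tau$ is an isomorphism iff it is injective, which is exactly nondegeneracy of $\tau(xy)$. This gives (a)$\Rightarrow$(b).

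Conversely, an $A$-module isomorphism $\Phi:A\to A^*$ is determined by $\tau:=\Phi(1)$, because $\Phi(a)=a\Phi(1)=\tau(a\,\cdot)$. Hence $\Phi=\Phi_\tau$ and $\tau$ is Frobenius, giving (b)$\Rightarrow$(a).

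The same observation describes all Frobenius forms. If $\tau_0$ is Frobenius, then every $\tau\in A^*$ equals $\Phi_{\tau_0}(u)=\tau_0(u\,\cdot)$ for a unique $u\in A$. This $\tau$ is Frobenius iff $\Phi_\tau=\Phi_{\tau_0}\circ L_u$ is bijective, which happens iff $L_u$ is bijective, i.e.\ iff $u\in A^\times$.

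\textbf{Kernel criterion.} The radical $\{a:\tau(ab)=0\ \forall b\}$ is an ideal contained in $\ker\tau$. Conversely, if $I\subseteq\ker\tau$ is a nonzero ideal and $0\neq a\in I$, then $ab\in I$ for all $b$, so $\tau(ab)=0$ for all $b$ and $a$ lies in the radical. Thus $\tau$ is Frobenius iff $\ker\tau$ contains no nonzero ideal.

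\textbf{(a)$\Leftrightarrow$(c).} We first reduce to the local case using \cref{f:structure}. A form on $A=\prod A_i$ is a sum of its restrictions $\tau_i$ to the factors $A_i=e_iA$. Ideals of $A$ are products of ideals of the $A_i$, so by the kernel criterion $\tau$ is Frobenius iff each $\tau_i$ is. The socle of $A$ is likewise the product of the socles.

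So let $(A,\mathfrak m)$ be local Artinian, with $\soc A=(0:\mathfrak m)$. Every nonzero ideal $I$ meets $\soc A$ nontrivially: choose $n$ maximal with $\mathfrak m^nI\neq0$, which exists since $\mathfrak m$ is nilpotent; then $\mathfrak m^nI\subseteq\soc A$.

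Suppose $\dim\soc A=1$. Pick $\tau$ with $\tau(\soc A)\neq0$. Each nonzero ideal contains the whole one-dimensional socle, hence is not contained in $\ker\tau$, so $\tau$ is Frobenius.

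Suppose instead $\dim\soc A\ge2$. For any $\tau$, the restriction of $\tau$ to $\soc A$ has a nonzero kernel vector $s$. Since $\mathfrak m s=0$, we get $As=\C s$, a nonzero ideal inside $\ker\tau$, so $\tau$ is not Frobenius.

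Finally $\soc A\neq0$ for local Artinian $A\neq0$ (apply the meeting argument to $I=A$), so these two cases are exhaustive.

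\textbf{Main obstacle.} The only step needing care is the local socle argument. In particular, "every nonzero ideal meets the socle" relies on nilpotency of $\mathfrak m$, which holds in Artinian rings. Everything else is linear algebra.
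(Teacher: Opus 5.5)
Your proof is correct and complete. The paper gives no argument of its own for this fact: it quotes it as standard and cites Eisenbud \S21.2 and Kock \S2.2, so the comparison is with the usual textbook treatment.

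Your handling of (a)$\Leftrightarrow$(b), of the parametrization $\tau_0(u\,\cdot)$ with $u\in A^\times$, and of the kernel criterion is the standard linear-algebra argument. An $A$-linear map $A\to A^*$ is determined by the image of $1$, and the radical of $(x,y)\mapsto\tau(xy)$ is the largest ideal contained in $\ker\tau$.

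The genuine difference is in the local step (a)$\Leftrightarrow$(c). Eisenbud reaches it through duality for modules of finite length. There $A^*=\operatorname{Hom}_{\C}(A,\C)$ is the injective hull of the residue field, so $A\cong A^*$ holds exactly when $A$ is injective over itself, and for local $A$ this is equivalent to the socle being simple. You avoid injective modules entirely. Everything goes through the kernel criterion plus the observation that, since $\mathfrak m$ is nilpotent, every nonzero ideal of a local Artinian ring meets $\soc A$.

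Your route is shorter and elementary. As a by-product it shows that on a local Gorenstein factor a form is Frobenius exactly when it is nonzero on the socle. The paper uses exactly this, implicitly, for $\tau_0$ in the proof of \cref{thm:frameless}. The textbook route buys generality: it works over arbitrary residue fields and gives self-injectivity as an extra equivalent condition.

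One small point: your step $As=\C s$ silently uses $A=\C\cdot1+\mathfrak m$, i.e.\ that the residue field is $\C$. This holds for finite-dimensional local $\C$-algebras. Over a non-closed field you would have to read ``one-dimensional socle'' as ``simple socle'' and choose $s$ so that the whole line $(A/\mathfrak m)s$ lies in $\ker\tau$.
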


\begin{fact}[Copairing / Casimir element]\label{f:casimir}
For a Frobenius form $\tau$ with dual bases $\{x_\alpha\},\{x^\alpha\}$
($\tau(x_\alpha x^\beta)=\delta_{\alpha\beta}$), the element
$\omega_\tau=\sum_\alpha x_\alpha\otimes x^\alpha\in A\otimes A$ is independent of the
basis and satisfies, for all $a,x\in A$:
\begin{gather*}
\text{(i) } (a\otimes1)\,\omega_\tau=(1\otimes a)\,\omega_\tau;\qquad
\text{(ii) } \textstyle\sum_\alpha\tau(x\,x_\alpha)\,x^\alpha=x;\\
\text{(iii) } \tau(h_\tau a)=\Tr(L_a),\ \text{ where } h_\tau=\textstyle\sum_\alpha x_\alpha x^\alpha.
\end{gather*}
The element $h_\tau$ is the \emph{handle element} of 2D TQFT~\cite{Abr96,Kock04}.
\end{fact}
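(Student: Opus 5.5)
The plan is to prove the three parts of \cref{f:casimir} directly from nondegeneracy of $(x,y)\mapsto\tau(xy)$, and to get basis-independence of $\omega_\tau$ first, since (i) and (iii) then follow by choosing convenient bases.

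\textbf{Basis independence.} Nondegeneracy (\cref{f:frobenius}(a)) gives the isomorphism $\Phi:A\to A^*$, $y\mapsto\tau(\cdot\,y)$. I will identify $A\otimes A\cong A\otimes A^*\cong\operatorname{End}(A)$ via $\mathrm{id}\otimes\Phi$. Under this identification $\sum_\alpha x_\alpha\otimes x^\alpha$ corresponds to the map $x\mapsto\sum_\alpha \tau(x\,x^\alpha)\,x_\alpha$. This map sends each $x_\beta$ to $x_\beta$, so it is the identity. Hence $\omega_\tau=(\mathrm{id}\otimes\Phi)^{-1}(\mathrm{id}_A)$, which depends only on $\tau$.

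\textbf{Part (ii).} Write $x=\sum_\beta c_\beta x^\beta$. Then $\tau(x\,x_\alpha)=c_\alpha$ by duality, so $\sum_\alpha\tau(x\,x_\alpha)x^\alpha=x$. This is the same computation as above with the roles of the two bases swapped. The dual basis of $\{x^\alpha\}$ is $\{x_\alpha\}$ because the pairing is symmetric, which uses commutativity.

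\textbf{Part (i).} Under the identification above, $(a\otimes1)\omega_\tau$ corresponds to $L_a\circ\mathrm{id}$. For $(1\otimes a)\omega_\tau=\sum x_\alpha\otimes a x^\alpha$ the corresponding map is $x\mapsto\sum\tau(x\,a x^\alpha)x_\alpha=\sum\tau((ax)x^\alpha)x_\alpha=ax$. So both sides correspond to $L_a$, and since $\mathrm{id}\otimes\Phi$ is injective they are equal. The key input is again commutativity and associativity, through $\tau(x\cdot ax^\alpha)=\tau(ax\cdot x^\alpha)$.

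\textbf{Part (iii).} Compute
\[
\tau(h_\tau a)=\sum_\alpha\tau(a\,x_\alpha x^\alpha)=\sum_\alpha\tau\bigl((a x_\alpha)x^\alpha\bigr).
\]
By duality, $\tau(y\,x^\alpha)$ is the $\alpha$-coordinate of $y$ in the basis $\{x_\beta\}$. So $\tau(ax_\alpha\,x^\alpha)$ is the $(\alpha,\alpha)$ matrix entry of $L_a$ in the basis $\{x_\alpha\}$, and summing over $\alpha$ gives $\Tr(L_a)$.

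None of these steps is a real obstacle. The only point needing care is the correct placement of which tensor factor carries $\Phi$, so that the identification is consistent across (i) and the basis-independence argument. I would fix it once at the start and reuse it throughout.
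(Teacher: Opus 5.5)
Your proof is correct and complete. Identifying $A\otimes A$ with $\operatorname{End}(A)$ via $\mathrm{id}\otimes\Phi$ sends $\omega_\tau$ to $\mathrm{id}_A$, which gives basis independence and (i) at once. Your coordinate computations for (ii) and (iii) are right, with commutativity and associativity used exactly where you flag them.

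The paper gives no proof of its own here. It lists this among the standard facts and defers to Eisenbud and Kock (with Abrams for the handle element), so your argument supplies the routine verification that those sources contain.
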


\begin{fact}[Orthonormal bases]\label{f:orthonormal}
Every nondegenerate symmetric bilinear form on a finite-dimensional $\C$-space has an
orthonormal basis, and every vector $v$ with $B(v,v)=1$ extends to one (by Gram--Schmidt,
avoiding isotropic pivots; square roots exist in $\C$). Over a field without enough
square roots one gets only an orthogonal basis, with $B\cong\operatorname{diag}(d_1,\dots,d_k)$.
That case is handled by weighted equality edges (\cref{rem:fields}).
\end{fact}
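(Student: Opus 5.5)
The plan is induction on $\dim V$, where $B$ is a nondegenerate symmetric bilinear form on a finite-dimensional $\C$-space $V$. I will prove the extension statement directly: every $v$ with $B(v,v)=1$ extends to an orthonormal basis. The existence statement then follows once we produce a single vector with $B(v,v)=1$.

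The first step is to show that a nondegenerate symmetric form on a nonzero space has a non-isotropic vector. Write $Q(x)=B(x,x)$. By polarization, $B(x,y)=\tfrac12\bigl(Q(x+y)-Q(x)-Q(y)\bigr)$, which uses characteristic $\neq2$. If $Q$ vanished identically, then $B$ would be identically zero. That contradicts nondegeneracy when $V\neq0$. So some $u$ has $Q(u)\neq0$. Choose a square root $\lambda$ of $Q(u)$ in $\C$ and set $v=u/\lambda$; then $B(v,v)=1$. This step is where one ``avoids isotropic pivots''. It is the only place that needs care, and I expect it to be the main (if mild) obstacle.

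The second step is the splitting. Given $v$ with $B(v,v)=1$, let $v^\perp=\{x: B(v,x)=0\}$. Every $x\in V$ decomposes as $x=B(v,x)\,v+\bigl(x-B(v,x)v\bigr)$, and the second term lies in $v^\perp$. Moreover $\C v\cap v^\perp=0$, because $B(v,v)\neq0$. Hence $V=\C v\oplus v^\perp$ as an orthogonal direct sum. The restriction of $B$ to $v^\perp$ is nondegenerate: if $y\in v^\perp$ pairs to zero with all of $v^\perp$, then it also pairs to zero with $v$, hence with all of $V$, so $y=0$. Since $\dim v^\perp=\dim V-1$, the induction hypothesis gives an orthonormal basis of $v^\perp$, obtained by applying the first step to get a unit vector there and then recursing. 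Adjoining $v$ to it gives an orthonormal basis of $V$ containing $v$.

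For the remark about general fields of characteristic $\neq2$, the same induction works with one change. We keep the non-isotropic vector $u$ unnormalized, and the projection becomes $x-\tfrac{B(u,x)}{B(u,u)}\,u$. This yields an orthogonal basis in which $B\cong\operatorname{diag}(d_1,\dots,d_k)$ with $d_i=Q(u_i)\neq0$, which is exactly the weaker conclusion stated.
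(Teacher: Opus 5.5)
Your proposal is correct and takes the same approach as the paper, whose entire justification is ``Gram--Schmidt, avoiding isotropic pivots; square roots exist in $\C$.'' Your polarization step (a nondegenerate symmetric form in characteristic $\ne2$ has some $u$ with $B(u,u)\neq0$), followed by the orthogonal splitting $V=\C v\oplus v^\perp$ with nondegenerate restriction and induction on dimension, is exactly that procedure written out, and the unnormalized variant correctly gives the diagonal form $\operatorname{diag}(d_1,\dots,d_k)$ over other fields of characteristic $\ne2$.
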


\begin{fact}[Macaulay inverse systems]\label{f:macaulay}
For a polynomial $F$ in $N$ variables, the algebra
$A_F=\C[\partial_1,\dots,\partial_N]/\operatorname{Ann}(F)$ is local Artinian Gorenstein, and every local Artinian Gorenstein algebra arises this
way~\cite{IK99}. For a form $F$ of degree $s$, $A_F$ is graded with socle in degree $s$
and symmetric Hilbert function.
\end{fact}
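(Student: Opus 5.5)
We would prove the three assertions in turn, using the apolarity action of $S=\C[\partial_1,\dots,\partial_N]$ on the polynomial ring $R=\C[X_1,\dots,X_N]$, where $\partial_i$ acts as $\partial/\partial X_i$. The key tool is the pairing $S\times R\to\C$, $\langle g,P\rangle=(g\cdot P)(0)$. In degree $d$ it is perfect: it pairs the monomial basis $\partial^\alpha$ against $X^\beta$ with $\langle\partial^\alpha,X^\beta\rangle=\alpha!\,\delta_{\alpha\beta}$.

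\emph{First claim.} The map $g\mapsto g\cdot F$ identifies $A_F$ with the space $S\cdot F$ of all partial derivatives of $F$, which is finite-dimensional. Every $\partial$-monomial of degree $>\deg F$ kills $F$, so $\mathfrak m=(\partial_1,\dots,\partial_N)$ is nilpotent in $A_F$. Hence $A_F$ is Artinian and local with maximal ideal $\mathfrak m$. For Gorensteinness we set $\tau(\bar g)=\langle g,F\rangle$ and check the criterion of \cref{f:frobenius}. Suppose $\tau(\bar g\bar h)=0$ for all $h$. Then every derivative of $g\cdot F$ vanishes at $0$, so $g\cdot F=0$, i.e.\ $g\in\operatorname{Ann}(F)$. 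Thus $\tau$ is a Frobenius form, and \cref{f:frobenius} gives that $A_F$ is Gorenstein with one-dimensional socle.

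\emph{Converse (main step).} Let $A$ be local Artinian Gorenstein with maximal ideal $\mathfrak m_A$ and Frobenius form $\tau$. Choose generators of $\mathfrak m_A$. This gives a surjection $\pi:S\to A$ with kernel $I$, and $I\supseteq\mathfrak m^{s+1}$ where $\mathfrak m_A^{s+1}=0$. The functional $\tau\circ\pi$ on $S$ vanishes on $\mathfrak m^{s+1}$, so it factors through $S/\mathfrak m^{s+1}$. By perfectness of the apolarity pairing in degrees $\le s$, there is a unique polynomial $F$ of degree $\le s$ with $\tau(\pi(g))=\langle g,F\rangle$ for all $g$. We then compare ideals:
\[
g\in\operatorname{Ann}(F)\iff \langle gh,F\rangle=0\ \forall h\iff \tau(\pi(g)\pi(h))=0\ \forall h\iff \pi(g)=0,
\]
where the first equivalence is the ``all derivatives vanish at $0$'' argument above and the last is nondegeneracy of $\tau$. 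Hence $\operatorname{Ann}(F)=I$ and $A\cong A_F$. We expect this identification to be the main obstacle. The delicate point is that the dual functional must be represented by an honest polynomial, which is where finiteness ($\mathfrak m_A^{s+1}=0$) is used. Over $\C$ the divided-power subtlety disappears because the factorials are invertible.

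\emph{Graded case.} If $F$ is a form of degree $s$, then $\operatorname{Ann}(F)$ is homogeneous, since the homogeneous components of $g$ act on $F$ in distinct degrees. So $A_F$ is graded, with $(A_F)_d\cong\partial^{d}$-derivatives of $F$, nonzero exactly for $0\le d\le s$. The form $\tau$ is supported in degree $s$, and $(A_F)_s=\C$. The nondegenerate pairing $\tau(ab)$ then restricts to perfect pairings $(A_F)_d\times(A_F)_{s-d}\to(A_F)_s=\C$, because products landing in degree $\neq s$ are killed by $\tau$. This gives $\dim(A_F)_d=\dim(A_F)_{s-d}$, i.e.\ a symmetric Hilbert function. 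Finally, the socle is one-dimensional and contains $(A_F)_s$, since $\mathfrak m\cdot(A_F)_s=0$. Therefore the socle equals $(A_F)_s$ and sits in degree $s$.
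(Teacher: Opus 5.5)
Your proof is correct. The paper does not prove this fact; it quotes it from Iarrobino--Kanev. What you have written is essentially the classical apolarity (Macaulay--Emsalem) argument behind that reference, specialised to characteristic $0$, where differentiation can replace contraction because the factorials $\alpha!$ are invertible.

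Compared with the bare citation, your version routes everything through \cref{f:frobenius}: the functional $\bar g\mapsto(g\cdot F)(0)$ is shown explicitly to be a Frobenius form. For a form $F$ of degree $s$ this functional vanishes below degree $s$ and is nonzero on $(A_F)_s$. So up to scale it is exactly the socle projection $\tau_0$ used in the proof of \cref{thm:frameless}(ii). Your argument therefore also justifies the step there that needs $\tau_0$ to be a Frobenius form.

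The only hypothesis you use tacitly is $F\neq0$; without it $A_F=0$, which is not local.
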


\begin{fact}[Smoothability]\label{f:smooth}
$A$ is \emph{smoothable} if it is a flat limit of reduced algebras $\C^{\dim A}$. Every
Gorenstein algebra of length $\le13$ is smoothable, since the Gorenstein locus of the
Hilbert scheme of $\le13$ points is irreducible~\cite{CJN15}. A general local
Gorenstein algebra with Hilbert function $(1,6,6,1)$ (length $14$) is not
smoothable~\cite{IE78}. See~\cite{Jel23} for a survey.
\end{fact}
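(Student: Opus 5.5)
Since this statement is a compilation of known results rather than a new claim, my plan is to justify each of its three parts by reduction to the cited literature, isolating the one genuinely geometric input in each case. \textbf{The definition.} I take smoothability to mean that $\operatorname{Spec}A$, embedded in some affine space $\mathbb{A}^N$, is a point of the \emph{smoothable component} of the Hilbert scheme $\operatorname{Hilb}_k(\mathbb{A}^N)$. This component is the closure of the locus of $k$ distinct reduced points, and it is irreducible of dimension $kN$. Being a flat limit of $\C^{k}$ is exactly membership in this closure, and the property is independent of the chosen embedding. I will use this standard reformulation throughout.

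\textbf{Lengths $\le 13$.} I would cite~\cite{CJN15}: the Gorenstein locus of $\operatorname{Hilb}_{k}(\mathbb{A}^N)$ is irreducible for $k\le13$. Since it contains the reduced locus, which is open and dense in the smoothable component and consists of Gorenstein points, irreducibility forces the whole Gorenstein locus into the smoothable component. Two further reductions are needed to pass from local to arbitrary Gorenstein algebras:
\begin{itemize}
\item by \cref{f:structure}, $A=\prod A_i$ is smoothable iff each local factor $A_i$ is, since products of flat families are flat families;
\item each local factor is again Gorenstein of length $\le13$ by \cref{f:frobenius}.
\end{itemize}
The hard content, the classification of local Gorenstein algebras of small length and the tangent-space estimates, lives inside~\cite{CJN15}. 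I would not reprove it.

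\textbf{Non-smoothability at length $14$.} This is the part I would actually sketch, following the dimension count of~\cite{IE78}. By \cref{f:macaulay}, a local Gorenstein algebra with Hilbert function $(1,6,6,1)$ is $A_F$ for a polynomial $F$ of degree $3$ in $6$ variables whose cubic part is nondegenerate. I would proceed in three steps.
\begin{enumerate}[label=(\roman*)]
\item Bound the dimension of the locus $\mathcal{L}\subset\operatorname{Hilb}_{14}(\mathbb{A}^6)$ of such algebras. The polynomials $F$ of degree $\le3$ form a space of dimension $\binom{9}{3}=84$. The ideal $\operatorname{Ann}(F)$ depends only on the cyclic $A$-module generated by $F$, which removes $\dim A=14$ parameters. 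Adding $6$ for the support point gives $\dim\mathcal{L}\le 84-14+6=76$.
\item Compute the tangent space $\operatorname{Hom}(I,A)$ at a general point of $\mathcal{L}$ and show that it has dimension $76<84=14\cdot6$.
\item Conclude. At every point of the smoothable component the tangent space has dimension at least $84$, so a general point of $\mathcal{L}$ cannot lie there.
\end{enumerate}

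The main obstacle is step (ii). For a general cubic, the graded pieces of $\operatorname{Hom}(I,A)$ in negative degree must be computed; I would first try the Iarrobino--Emsalem method of reducing this to the rank of the apolarity (catalecticant) maps of a general cubic. Everything else above is formal.
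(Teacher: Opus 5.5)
Your proposal is correct and follows the paper's approach. The paper proves nothing here itself and rests the two claims on Casnati--Jelisiejew--Notari (the Gorenstein locus is irreducible, so it lies in the closure of the open dense reduced locus) and on Iarrobino--Emsalem (a general $(1,6,6,1)$ algebra has a $76$-dimensional tangent space, below the $84=14\cdot 6$ of the smoothable component); your sketches are the standard arguments behind those citations. Your reduction to local factors is harmless, and only its ``if'' direction is used, but it is unnecessary because the irreducibility result already covers non-local Gorenstein subschemes.
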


\section{Main theorems}\label{sec:main}

\begin{definition}[Fibonacci datum]\label{def:datum}
A \emph{Fibonacci datum on $[k]$} is a triple $(A,\tau,\varphi)$: an algebra $A$, a
Frobenius form $\tau$, and $\varphi=(\varphi_1,\dots,\varphi_k)\in A^k$ such that
\begin{equation}\label{eq:casimir}
\sum_{c=1}^{k}\varphi_c\otimes\varphi_c=\omega_\tau,
\qquad\text{equivalently}\qquad
\sum_{c=1}^{k}\tau(x\varphi_c)\,\varphi_c=x\quad\text{for all }x\in A.
\end{equation}
Its \emph{gates} are the symmetric signatures
\[
F^w_n(c_1,\dots,c_n)=\tau\bigl(w\,\varphi_{c_1}\varphi_{c_2}\cdots\varphi_{c_n}\bigr),\qquad
w\in A,\ n\ge0.
\]
Put $h=\sum_c\varphi_c^2$. By~\eqref{eq:casimir} and \cref{f:casimir}, $h=h_\tau$.
\end{definition}

If $k=\dim A$, condition~\eqref{eq:casimir} says exactly that $\varphi$ is a
$\tau$-orthonormal basis: its dual basis is itself. In general $\eqref{eq:casimir}$
forces $\dim A\le k$, because the left side has tensor rank $\le k$ while
$\omega_\tau$ has rank $\dim A$.

\begin{theorem}[Merge is multiplication]\label{thm:merge}
For any Fibonacci datum and all $w,z\in A$:
\[
F^w_n\cross F^z_m=F^{wz}_{n+m-2},\qquad (F^w_n)^{\circ}=F^{wh}_{n-2}.
\]
In particular the gates of a datum form a class closed under cut-and-merge, and each
gate stays described by a single element of $A$.
\end{theorem}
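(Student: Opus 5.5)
The plan is a direct computation from the definition of $F^w_n$, using only the reproducing identity in~\eqref{eq:casimir}: $\sum_c\tau(x\varphi_c)\varphi_c=x$. Both merges sum over a colour $c$ on the reconnected edge(s). So the goal is to collapse each such sum into a single $\tau$-value by absorbing the matching pair of $\varphi$'s into an algebra product.

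For the cross-component merge, fix outer colours $x_1,\dots,x_{n-1}$ and $y_1,\dots,y_{m-1}$. Put $a=w\varphi_{x_1}\cdots\varphi_{x_{n-1}}$ and $b=z\varphi_{y_1}\cdots\varphi_{y_{m-1}}$. Then
\[
(F^w_n\cross F^z_m)(x,y)=\sum_c\tau(a\varphi_c)\,\tau(b\varphi_c)
=\tau\Bigl(b\sum_c\tau(a\varphi_c)\varphi_c\Bigr)=\tau(ba).
\]
The middle step uses linearity of $\tau$ and the last step uses~\eqref{eq:casimir} with $x=a$. By commutativity, $\tau(ba)=\tau(wz\,\varphi_{x_1}\cdots\varphi_{y_{m-1}})=F^{wz}_{n+m-2}(x,y)$. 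The arities are as in the definition of $\cross$, and symmetry of the result is automatic because $A$ is commutative.

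For the same-component merge, put $a=w\varphi_{x_1}\cdots\varphi_{x_{n-2}}$. Then $(F^w_n)^\circ(x)=\sum_c\tau(a\varphi_c^2)=\tau\bigl(a\sum_c\varphi_c^2\bigr)=\tau(ah)=F^{wh}_{n-2}(x)$. This is immediate from the definition $h=\sum_c\varphi_c^2$. The identification $h=h_\tau$ is recorded in \cref{def:datum} and is not needed here.

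For closure, both operations send gates $F^w_n$ to gates $F^{w'}_{n'}$ of the same datum, with $w'\in\{wz,wh\}$. Each gate is therefore tracked by one element of $A$, which is $\dim A\le k$ coordinates. The only edge cases are arity conventions. When $n=1$ or $m=1$, the formula still holds with empty products, and unary gates $F^w_1(c)=\tau(w\varphi_c)$ fit the same pattern. The arity-$0$ gate $F^w_0=\tau(w)$ is the final Holant value. The main point needing care is the use of~\eqref{eq:casimir} in the "for all $x$" form when $k>\dim A$, since $\varphi$ need not be a basis. The computation above uses only that identity, not independence of the $\varphi_c$, so I expect no genuine obstacle.
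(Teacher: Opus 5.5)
Your proof is correct and is essentially the paper's argument. You collapse the colour sum in the cross-component merge with the reproducing identity \eqref{eq:casimir} at $x=w\varphi_{x_1}\cdots\varphi_{x_{n-1}}$, absorb $\sum_c\varphi_c^2=h$ in the same-component merge, and get symmetry from commutativity; your observation that only the ``for all $x$'' form of \eqref{eq:casimir} is needed, so the case $k>\dim A$ causes no trouble, is accurate.
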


\begin{proof}
Write $\varphi_{\mathbf a}=\varphi_{a_1}\cdots\varphi_{a_{n-1}}$ for the free inputs
$\mathbf a$ of the first gate and $\varphi_{\mathbf b}$ likewise. Then
\begin{align*}
(F^w_n\cross F^z_m)(\mathbf a,\mathbf b)&=\sum_c\tau(w\varphi_{\mathbf a}\varphi_c)\,\tau(z\varphi_{\mathbf b}\varphi_c)
=\tau\Bigl(z\varphi_{\mathbf b}\sum_c\tau(w\varphi_{\mathbf a}\varphi_c)\varphi_c\Bigr)\\
&=\tau(z\varphi_{\mathbf b}\,w\varphi_{\mathbf a})=F^{wz}_{n+m-2}(\mathbf a,\mathbf b),
\end{align*}
using~\eqref{eq:casimir} with $x=w\varphi_{\mathbf a}$. The same-component case is immediate:
$\sum_c\tau(w\varphi_{\mathbf a}\varphi_c\varphi_c)=\tau(wh\,\varphi_{\mathbf a})$. Symmetry of the merged
signature in all $n+m-2$ inputs, the first step of Liu's proofs, is just commutativity
of $A$.
\end{proof}

\begin{corollary}[Closed form for the Holant]\label{cor:holant}
Let $G$ be connected, with gate $F^{w_v}_{\deg v}$ at each vertex $v$, and let
$\beta(G)=|E|-|V|+1$ be its cyclomatic number. Then
\[
\Hol(G)=\tau\Bigl(h^{\beta(G)}\prod_{v\in V}w_v\Bigr).
\]
For disconnected $G$, multiply over components.
\end{corollary}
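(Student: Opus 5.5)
We will prove the corollary by running Liu's cut-and-merge procedure and tracking, with \cref{thm:merge}, the single element of $A$ that describes each gate. First cut every edge of $G$ into two dangling half-edges. The result is a disjoint union of the vertex gates $F^{w_v}_{\deg v}$, one for each $v\in V$. Each half-edge is an input of the gate at its endpoint; a loop gives two inputs of the same gate. Then reconnect the edges one at a time in a chosen order. Take a spanning tree $T$ of $G$, which exists because $G$ is connected, and reconnect the $|V|-1$ tree edges first, then the $\beta(G)=|E|-|V|+1$ remaining edges.

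\textbf{Tree edges.} At every stage the gates currently present correspond to the connected components of the subgraph reconnected so far. Each component carries one gate $F^{u}_{n}$, where $u$ is the product of $w_v$ over its vertices and $n$ is its number of dangling half-edges. This holds at the start. Reconnecting a tree edge joins two distinct components, because $T$ is acyclic, so it is a cross-component merge. By \cref{thm:merge} it replaces $F^{u}_n$ and $F^{u'}_{n'}$ by $F^{uu'}_{n+n'-2}$, and the invariant persists. Since all gates are symmetric, it does not matter which input positions the edge uses. After all tree edges are reconnected we have one gate $F^{\prod_v w_v}_{N}$ with $N=\sum_v\deg v-2(|V|-1)=2\beta(G)$.

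\textbf{Non-tree edges.} Each non-tree edge now has both half-edges on the same gate, so reconnecting it is a same-component merge. By \cref{thm:merge} it multiplies the element by $h$ and lowers the arity by $2$. After the $\beta(G)$ non-tree edges we reach the arity-$0$ gate $F^{h^{\beta(G)}\prod_v w_v}_0$. By \cref{def:datum} its value is $\tau(h^{\beta(G)}\prod_v w_v)$, since an empty product of $\varphi$'s is $1$.

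\textbf{Main check.} The one step needing care is that this sequence of merges really computes $\Hol(G)$. Each merge sums over the colour of one edge, and the merge formulas are exactly partial contractions of the tensor network defining $\Hol(G)$. By induction, the final arity-$0$ value is therefore the full sum over $\sigma:E\to[k]$, as Section~\ref{sec:holant} already asserts. The same argument covers loops and multi-edges, since such an edge is a non-tree edge or is handled by symmetry. For disconnected $G$ the Holant sum factorizes over components, which gives the product formula.
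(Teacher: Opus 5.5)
Your proof is correct and follows the paper's own argument: cross-merge along a spanning tree, close the remaining $\beta(G)$ edges by same-component merges, apply \cref{thm:merge}, and read off $\tau$ of the final element. Your component invariant, the arity count $N=2\beta(G)$, and your remarks on loops, multi-edges and disconnected graphs simply spell out details that the paper's one-line proof leaves implicit.
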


\begin{proof}
Merge along a spanning tree ($|V|-1$ cross merges), then close the remaining $\beta(G)$
edges (same-component merges), and apply \cref{thm:merge}. The arity-$0$ gate $F^W_0$ has
value $\tau(W)$.
\end{proof}

\begin{theorem}[Dictionary]\label{thm:dictionary}
Fix $k\ge2$. The following are in natural bijection:
\begin{enumerate}[label=(\alph*)]
\item admissible parameters $T$ (\cref{def:liu-k});
\item isomorphism classes of triples $(A,\tau,\varphi)$ with $\dim A=k$, $\tau$ a Frobenius
form, and $\varphi$ a $\tau$-orthonormal basis with $\varphi_1=1$, i.e.\ Fibonacci data
on $[k]$ in which the first colour is the unit.
\end{enumerate}
The bijection is $T_{lmn}=\tau(\varphi_l\varphi_m\varphi_n)$. Under it, for every arity
$n\ge1$, the generalized Fibonacci gates with parameter $T$ are exactly the gates
$F^w_n$ of the datum, and $w=\sum_{c}g(1^{n-1}c)\,\varphi_c$ is read off from the $k$
``top'' values. In particular every algebra in~(b) is Artinian Gorenstein.
\end{theorem}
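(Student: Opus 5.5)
The plan is to construct the two directions of the bijection explicitly, check that they are mutually inverse, and then identify the gates by a uniqueness argument driven by the recurrence~\eqref{eq:rec}.

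\emph{From (a) to (b).} Given an admissible $T$, let $A$ have basis $e_1,\dots,e_k$. Declare $e_1$ to be the unit, and for $m,n\ge2$ set
\[
e_me_n=\delta_{mn}e_1+\sum_{l=2}^kT_{mnl}e_l .
\]
Define $\tau(e_1)=1$ and $\tau(e_l)=0$ for $l\ge2$, and put $\varphi_c=e_c$.
\begin{itemize}[leftmargin=*]
\item Commutativity follows from the symmetry of $T$.
\item Associativity only needs checking on triples $e_m,e_n,e_p$ with $m,n,p\ge2$, since $e_1$ is the unit. Expanding $(e_me_n)e_p$ gives $e_1$-coefficient $T_{mnp}$, which is totally symmetric. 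Its $e_q$-coefficient is $\delta_{mn}\delta_{pq}+\sum_lT_{mnl}T_{lpq}$. Writing $e_m(e_ne_p)=(e_pe_n)e_m$ instead gives $\delta_{pn}\delta_{mq}+\sum_lT_{pnl}T_{lmq}$. These agree exactly by~\eqref{eq:assoc}.
\item By construction $\tau(e_ae_b)=\delta_{ab}$. So $\tau(xy)$ is nondegenerate, $\tau$ is a Frobenius form, and $\varphi$ is $\tau$-orthonormal with $\varphi_1=1$.
\item Finally $\tau(e_le_me_n)=\tau\bigl(e_l(\delta_{mn}e_1+\sum T_{mnr}e_r)\bigr)=T_{lmn}$, so the stated formula for the bijection holds.
\item By \cref{f:frobenius}, $A$ is Artinian Gorenstein, which gives the last sentence of the theorem.
\end{itemize}

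\emph{From (b) to (a).} Given a datum, set $T_{lmn}=\tau(\varphi_l\varphi_m\varphi_n)$ for $l,m,n\ge2$; this is totally symmetric. Apply~\eqref{eq:casimir} with $x=\varphi_m\varphi_n$ and use $\tau(\varphi_m\varphi_n\varphi_1)=\delta_{mn}$. This shows the structure constants of $A$ in the basis $\varphi$ are exactly the multiplication table above. Running the associativity computation backwards then yields~\eqref{eq:assoc}.

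The two constructions are mutually inverse. Here an isomorphism of triples means an algebra isomorphism carrying $\tau$ to $\tau$ and $\varphi$ to $\varphi$. Such a triple is determined up to isomorphism by its structure constants in the basis $\varphi$, and those constants are determined by $T$ together with the normalisations $\varphi_1=1$ and orthonormality.

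\emph{Gates.} First I show each $F^w_n$ satisfies~\eqref{eq:rec}. Write the input multiset $\gamma-2e_1+e_m+e_{m'}$ as $\varphi_{\mathbf r}\varphi_m\varphi_{m'}$ with $\varphi_1=1$ absorbed. Substituting the product rule $\varphi_m\varphi_{m'}=\delta_{mm'}+\sum_lT_{mm'l}\varphi_l$ into $\tau(w\varphi_{\mathbf r}\varphi_m\varphi_{m'})$ gives exactly the right-hand side of~\eqref{eq:rec}.

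Conversely, I will prove that a generalized Fibonacci gate $g$ of arity $n\ge2$ is determined by its top values $g(1^{n-1}c)$, $c\in[k]$; this is the step needing the most care. The induction is downward on $\gamma_1$. Any $\gamma$ with $\gamma_1\le n-2$ has two non-first colours $m,m'$ (possibly equal). Hence $\gamma=\gamma'-2e_1+e_m+e_{m'}$ with $\gamma'_1=\gamma_1+2$. Relation~\eqref{eq:rec} then expresses $g(\gamma)$ through values with first-colour count $\gamma_1+2$ and $\gamma_1+1$, which are already determined. The base cases $\gamma_1\in\{n,n-1\}$ are precisely the top values.

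So the solution space of~\eqref{eq:rec} has dimension at most $k$. It contains $F^w_n$ for every $w$, and $F^w_n(1^{n-1}c)=\tau(w\varphi_c)$ is the $c$-th coordinate of $w$ by orthonormality. Therefore $g=F^w_n$ with $w=\sum_cg(1^{n-1}c)\varphi_c$. Overdetermination of the recurrence causes no trouble, since existence is supplied by $F^w_n$ and only uniqueness is used.

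For arity $1$, every unary signature is a gate by definition, and every unary signature is some $F^w_1$ by the same coordinate argument. I expect the main obstacle to be making this uniqueness induction and the isomorphism-class bookkeeping airtight; the algebra itself is just the associativity computation above.
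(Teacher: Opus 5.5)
Your proposal is correct and follows the paper's proof essentially step for step: the same multiplication table with $\tau(e_1)=1$ for (a)$\Rightarrow$(b), expansion in the orthonormal basis via \eqref{eq:casimir} for (b)$\Rightarrow$(a), and the same induction showing that gates are determined by their top values (yours runs downward on $\gamma_1$, the paper's on $|\gamma|-\gamma_1$, which coincide at fixed arity). You are a little more explicit than the paper in two places, namely what an isomorphism of triples is and why the two constructions are mutually inverse, and the separate treatment of unary gates; the paper leaves these implicit.
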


\begin{proof}[Proof sketch]
(b)$\Rightarrow$(a). Expanding a product in the orthonormal basis gives
\[\varphi_m\varphi_{m'}=\sum_c\tau(\varphi_m\varphi_{m'}\varphi_c)\varphi_c
=\delta_{mm'}\cdot1+\sum_{l\ge2}T_{mm'l}\varphi_l.\] Here $T$ is totally symmetric because
$A$ is commutative, and the coefficient of $1$ is
$\tau(\varphi_m\varphi_{m'})=\delta_{mm'}$. So
$F^w_n(\gamma-2e_1+e_m+e_{m'})=\tau(w\varphi^{\gamma-2e_1}\varphi_m\varphi_{m'})$ satisfies
\eqref{eq:rec}, using $\varphi_1=1$. Expanding
$(\varphi_m\varphi_n)\varphi_p=\varphi_m(\varphi_n\varphi_p)$ and comparing coefficients
of $\varphi_q$ gives~\eqref{eq:assoc}.

(a)$\Rightarrow$(b). Given admissible $T$, define a product on $\C^k$ with basis
$e_1,\dots,e_k$ by $e_1$ a unit and
$e_me_{m'}=\delta_{mm'}e_1+\sum_lT_{mm'l}e_l$. It is commutative because $T$ is
symmetric. Comparing coefficients shows that associativity on basis triples is
exactly~\eqref{eq:assoc}; the coefficient of $e_1$ matches automatically by symmetry of
$T$. Set $\tau(e_1)=1$ and $\tau(e_m)=0$ for $m\ge2$. Then
$\tau(e_me_{m'})=\delta_{mm'}$, so the basis is orthonormal and $\tau$ is Frobenius.

Gates. Repeated use of~\eqref{eq:rec} expresses every value $g(\gamma)$ through the $k$
values with at most one non-first colour, by induction on $|\gamma|-\gamma_1$. So a
gate of arity $n\ge1$ is determined by its top values. $F^w_n$ is a gate with top values
$\tau(w\varphi_c)$, and these range over all of $\C^k$ as $w$ ranges over $A$ because
$\tau$ is nondegenerate. Hence the gates are exactly the $F^w_n$.
\end{proof}

\begin{corollary}[Liu's Conjecture]\label{cor:liu}
\Cref{conj:liu} holds for every $k\ge2$. Generalized Fibonacci gates exist on every
domain, and for a finite generalized Fibonacci set with parameter $T$, $\Hol(G)$ is
computed by: build the multiplication table of $A$ from $T$; read off
$w_v=\sum_c f_v(1^{\deg v-1}c)\,\varphi_c$; output $\tau(h^{\beta}\prod_vw_v)$ per
component. This uses $O\bigl(k^3(|V|+\log|E|)\bigr)$ arithmetic operations after the
input is read, polynomial in $k$ as well as in $G$.
\end{corollary}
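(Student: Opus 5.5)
The plan is to assemble \cref{cor:liu} directly from \cref{thm:dictionary}, \cref{thm:merge} and \cref{cor:holant}, and then count operations.

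\emph{Existence.} For every $k\ge2$ we need one admissible $T$. The simplest choice is $A=\C^k$ with $\tau(x)=\frac1k\sum_i x_i$. Its orthonormal bases containing $1$ come from a $k\times k$ orthogonal matrix (for instance a scaled character table of $\mathbb{Z}/k$, suitably symmetrised to be real, or just by \cref{f:orthonormal} applied to the unit vector $1$, which has $\tau(1\cdot1)=1$). Alternatively, and more simply, take $A=\C[x]/(x^k)$ with $\tau$ the coefficient of $x^{k-1}$. This is Frobenius by \cref{f:frobenius}, and \cref{f:orthonormal} gives it an orthonormal basis extending $1$ provided $\tau(1)=1$. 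So I would use instead the Frobenius form $\tau_0(u\,\cdot)$ with $u$ a unit chosen so that $\tau(1)=1$, e.g.\ $u=1+x^{k-1}$. Either way \cref{thm:dictionary} turns the datum into an admissible $T$. Nontrivial gates exist because $F^w_n$ is a nonzero signature for $w\ne0$.

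\emph{Algorithm and correctness.} Given $T$, build the multiplication table on the basis $e_1,\dots,e_k$ as in the proof of \cref{thm:dictionary}: $e_1$ is the unit, and $e_me_{m'}=\delta_{mm'}e_1+\sum_l T_{mm'l}e_l$. This takes $O(k^3)$ entries. By \cref{thm:dictionary} each $f_v$ equals $F^{w_v}_{\deg v}$ with $w_v=\sum_c f_v(1^{\deg v-1}c)\,\varphi_c$. Here I use that $\tau$ restricted to the orthonormal basis gives coordinates $\tau(w\varphi_c)$. Unary signatures are handled the same way, since any $f$ of arity $1$ equals $F^w_1$ for the $w$ with coordinates $f(c)$. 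Then \cref{cor:holant} gives $\Hol(G)$ as the product over components of $\tau(h^{\beta}\prod_v w_v)$, with $h=\sum_c\varphi_c^2$.

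\emph{Cost.} Multiplying two elements in coordinates costs $O(k^3)$ using the structure tensor. The main steps are:
\begin{itemize}[leftmargin=*]
\item computing $h$ costs $O(k\cdot k^3)$, or better, $h=\sum_c e_c^2=k\,e_1+\sum_l(\sum_m T_{mml})e_l$ in $O(k^2)$;
\item the product $\prod_v w_v$ needs $|V|-1$ multiplications;
\item $h^\beta$ by repeated squaring needs $O(\log\beta)\le O(\log|E|)$ multiplications;
\item $\tau$ is reading the $e_1$ coordinate.
\end{itemize}
The total is $O(k^3(|V|+\log|E|))$. Finding the components and $\beta$ is linear in the graph and not arithmetic.

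The only point needing care is existence: making sure an orthonormal basis containing $1$ exists. This means checking $\tau(1)=1$ and applying \cref{f:orthonormal}. Everything else is bookkeeping on top of the previous results.
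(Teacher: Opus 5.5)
Your proposal is correct and is essentially the paper's argument. Existence comes from a unital Frobenius form such as $\tau=\frac1k\sum_i x_i$ on $\C^k$ (the paper's example with $t_i=1/k$) together with \cref{f:orthonormal} and \cref{thm:dictionary}, and the algorithm and operation count come from combining \cref{thm:dictionary} with \cref{cor:holant}. One small caveat, which the paper's statement shares: the $\log|E|$ term is right for connected $G$, but with many components, raising $h$ to each $\beta_i$ separately costs $O\bigl(\sum_i\log(\beta_i+1)\bigr)$ multiplications. That can exceed $O(|V|+\log|E|)$, though it is still polynomial.
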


For existence, any $(A,\tau)$ with $\tau(1)=1$ and $\dim A=k$ works by
\cref{f:orthonormal}, for example $A=\C^k$ with
$\tau(e_i)=t_i\ne0$ and $\sum t_i=1$. These frame families form a component of dimension
$k(k-1)/2$, which matches the degree-of-freedom counts in~\cite[\S\S2--3]{Liu24}.
Non-frame families exist for every $k\ge2$: for example $A=\C[t]/(t^2)\times\C^{k-2}$,
with $\tau$ normalized so that $\tau(1)=1$ (\cref{ex:boolean} is the case $k=2$).

\subsection{Generalized Liu Conjecture: recurrences of any depth}

Liu's relations have depth~$2$: each value is expressed through values with one or two
fewer first-colour inputs. A natural generalization allows recurrences of any depth.
The algebra handles this uniformly.

Let $S=\C[y_1,\dots,y_k]$, let $I\subset S$ be an ideal with $A=S/I$
finite-dimensional, and let $\varphi_c$ be the image of $y_c$. The \emph{recurrence class}
$\G(I)$ consists of the signatures
$F^\lambda_n(c_1,\dots,c_n)=\lambda(\varphi_{c_1}\cdots\varphi_{c_n})$ for
$\lambda\in A^{*}$ and $n\ge0$. Every $P\in I$ gives a linear relation satisfied by all
of them. It relates values of different arities when $P$ is not homogeneous; with
$\varphi_1=1$, as in Liu's normalization, one homogenizes by $y_1$ and gets relations
within a single arity. The \emph{depth} of $\G(I)$ is the least $d$ such that $A$ is spanned by
monomials of degree $<d$ in the $\varphi_c$. Liu's classes have depth $2$. Put
$\Omega=\sum_c\varphi_c\otimes\varphi_c$, and call $\G(I)$ \emph{balanced} if
$(a\otimes1)\Omega=(1\otimes a)\Omega$ for all $a\in A$.

\begin{theorem}[Closure criterion; our Generalized Liu Conjecture]\label{thm:glc}
\leavevmode
\begin{enumerate}[label=(\roman*)]
\item If $\G(I)$ is balanced, it is closed under cut-and-merge, at every depth. Indeed
$F^\lambda_n\cross F^\mu_m=F^{\lambda\star\mu}_{n+m-2}$ and
$(F^\lambda_n)^\circ=F^{\lambda(h\,\cdot)}_{n-2}$, where
$(\lambda\star\mu)(z)=(\lambda\otimes\mu)\bigl((z\otimes1)\Omega\bigr)$ is a commutative,
associative product on $A^{*}$. So $\Hol(\G(I))$ is polynomial-time.
\item Conversely, if $1\in\operatorname{span}(\varphi_c)$ and $\G(I)$ is closed under
cross-component merges at all sufficiently large arities, then $\G(I)$ is balanced.
\item If $\G(I)$ is balanced and $\Omega$ is nondegenerate (as a bilinear form on
$A^*$), then $A$ is Gorenstein, $\tau:=(\Omega^\#)^{-1}(1)$ is a Frobenius form,
$(A,\tau,\varphi)$ is a Fibonacci datum, $\lambda\star\mu$ corresponds to
multiplication, and the depth is at most $2$.
\end{enumerate}
\end{theorem}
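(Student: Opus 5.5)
The plan is to run the computation of \cref{thm:merge} with $\lambda\in A^*$ in place of $\tau(w\,\cdot)$, using balance as the substitute for~\eqref{eq:casimir}.

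\textbf{Part (i).} Write $\varphi_{\mathbf a}$, $\varphi_{\mathbf b}$ as in the proof of \cref{thm:merge}. Then
\[
(F^\lambda_n\cross F^\mu_m)(\mathbf a,\mathbf b)=(\lambda\otimes\mu)\bigl((\varphi_{\mathbf a}\otimes\varphi_{\mathbf b})\Omega\bigr).
\]
Balance, iterated over the factors of $\varphi_{\mathbf b}$, moves every factor to the left slot. This gives $(\lambda\otimes\mu)\bigl((\varphi_{\mathbf a}\varphi_{\mathbf b}\otimes1)\Omega\bigr)=(\lambda\star\mu)(\varphi_{\mathbf a}\varphi_{\mathbf b})$. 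The same-component merge is immediate: $\sum_c\lambda(\varphi_{\mathbf a}\varphi_c^2)=\lambda(h\varphi_{\mathbf a})$.

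For the algebraic properties of $\star$, I will use the swap symmetry of $\Omega$ together with balance. Commutativity holds because $(\lambda\star\mu)(z)=(\lambda\otimes\mu)((1\otimes z)\Omega)$. For associativity, I will note that $\Omega$ satisfies the Frobenius relation $(\Omega\otimes1)$ contracted with $(1\otimes\Omega)$. More simply, I will expand both sides of $((\lambda\star\mu)\star\nu)(z)$ and $(\lambda\star(\mu\star\nu))(z)$ as $\sum_{c,d}\lambda(z\varphi_c)\mu(\varphi_c\varphi_d)\nu(\varphi_d)$-type sums. Using balance to slide $z$ between slots, both sides become the same symmetric triple contraction. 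Polynomial time then follows as in \cref{cor:liu}, with $\lambda\in A^*$ stored as a vector of size $\dim A$.

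\textbf{Part (ii).} Suppose $1=\sum_c u_c\varphi_c$. Closure means that for each $\lambda,\mu$ there is some $\nu$ with
\[
(\lambda\otimes\mu)\bigl((x\otimes y)\Omega\bigr)=\nu(xy)
\]
for all monomials $x,y$ of the relevant degrees. Because $1$ lies in the span of the colours and the arity is large, I can pad with factors of $1$. Monomials of degree exactly $N$ then span all of $A$ once $N\ge$ depth, so $x$ and $y$ range over spanning sets. Now take $x=a x'$, $y=y'$ and compare with $x=x'$, $y=ay'$. Both have the same product, so $(\lambda\otimes\mu)\bigl((x'\otimes y')(a\otimes1-1\otimes a)\Omega\bigr)=0$ for all $\lambda,\mu,x',y'$. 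Taking $x'=y'=1$ (which is reachable by padding) gives balance.

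This is the step I expect to be the main obstacle. The difficulty is the bookkeeping that the homogenized arities really let $x'$ and $y'$ be $1$ and $a$ be an arbitrary basis element. I will handle it by writing $a$ as a monomial of degree $\le N$ and padding with $1=\sum u_c\varphi_c$, using multilinearity of the signature.

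\textbf{Part (iii).} If $\Omega$ is nondegenerate, then $\Omega^\#:A^*\to A$ is an isomorphism, and balance says exactly that $\Omega^\#$ is $A$-linear, i.e.\ $\Omega^\#(a\lambda)=a\,\Omega^\#(\lambda)$. Hence $A^*\cong A$ as $A$-modules, so $A$ is Gorenstein by \cref{f:frobenius}(b). The functional $\tau=(\Omega^\#)^{-1}(1)$ generates $A^*$, which means $\tau$ is Frobenius, and $\Omega^\#(\tau(w\,\cdot))=w$. Unwinding this gives~\eqref{eq:casimir}, so $\Omega=\omega_\tau$ and the triple is a datum. Then $\star$ transported through $\Omega^\#$ is multiplication, by \cref{thm:merge}.

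For depth $\le2$, I will use \eqref{eq:casimir}, which gives $x=\sum_c\tau(x\varphi_c)\varphi_c$ for every $x$. So $A$ is spanned by the $\varphi_c$, which are monomials of degree $1<2$.
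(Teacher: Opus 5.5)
Your proposal is correct and follows essentially the same route as the paper. Balance replaces~\eqref{eq:casimir} in the merge computation for (i). For (ii), you pad with $1=\sum_c u_c\varphi_c$ to extend $H$ to a bilinear form on $A$ and compare $H(a,1)$ with $H(1,a)$. For (iii), you use $A$-linearity of $\Omega^\#$ together with $A=\operatorname{span}(\varphi_c)$. The aside about a ``Frobenius relation'' in the associativity step is unnecessary: your direct expansion needs only one application of balance, with $a=\varphi_d$.
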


\begin{proof}[Proof sketch]
(i) For monomials $x,y$, $(F^\lambda\cross F^\mu)(x,y)=(\lambda\otimes\mu)((x\otimes
y)\Omega)=(\lambda\otimes\mu)((xy\otimes1)\Omega)$ by balance. Commutativity and
associativity of $\star$ follow from two more applications of balance.
(ii) If $1\in\operatorname{span}(\varphi)$, monomials of each large degree span $A$, so
$H(x,y)=\sum_c\lambda(x\varphi_c)\mu(y\varphi_c)$ is a bilinear form on $A$. Closure
says $H(x,y)=\nu(xy)$, hence $H(x,y)=H(xy,1)$ for all $\lambda,\mu$. Taking $x=1$ gives
$\sum_c\varphi_c\otimes y\varphi_c=\sum_c y\varphi_c\otimes\varphi_c$.
(iii) Balance says that $\Omega^\#:A^*\to A$, $\lambda\mapsto\sum_c\lambda(\varphi_c)\varphi_c$,
is $A$-linear. If it is bijective then $A^*\cong A$ (so $A$ is Gorenstein by
\cref{f:frobenius}), and with $\tau=(\Omega^\#)^{-1}(1)$ we get
$\Omega^\#(\tau(w\,\cdot))=w$, which is~\eqref{eq:casimir}. Finally,
$A=\operatorname{im}\Omega^\#\subseteq\operatorname{span}(\varphi_c)$, so the depth is
at most $2$.
\end{proof}

\begin{remark}[What part~(iii) says about depth]
Every balanced recurrence class of depth $\ge3$ (in particular, every closed one with
$1\in\operatorname{span}(\varphi_c)$) is \emph{degenerate}: its merges factor
through the proper ideal $\operatorname{im}\Omega^\#$ and lose information. Gates that
look ``deeper'' because frame vectors have collided (osculating, nilpotent behaviour)
are \emph{not} deeper recurrences. They are depth-$2$ gates whose algebra $A$ is
non-reduced (\cref{sec:landscape}). In this sense Liu's depth-$2$ definition already
covers every nondegenerate case.
\end{remark}

\begin{remark}[Other fields; weighted edges]\label{rem:fields}
The results of \cref{sec:main} work over any field of characteristic $\ne2$ that
contains the needed square roots. (\Cref{f:structure}, \cref{prop:frames} and
\cref{thm:frameless} below use that $\C$ is algebraically closed.) Over $\mathbb{Q}$ a Frobenius pairing may admit only an orthogonal basis with
$\tau(\varphi_c^2)=d_c$. Then \cref{thm:merge} holds verbatim for the Holant with binary
weighted-equality edges $\operatorname{diag}(d_c^{-1})$, and the sum-of-squares
(Hasse--Minkowski) obstruction to an \emph{orthonormal} basis concerns only unweighted
edges over non-closed fields.
\end{remark}

\section{The landscape: frames, collisions, frameless gates}\label{sec:landscape}

\begin{proposition}[Semisimple $=$ orthogonal frames]\label{prop:frames}
A Fibonacci datum has $A$ reduced iff its gates are orthogonal frame sums
$F_n=\sum_{i=1}^r a_i\,v_i^{\otimes n}$ with $\langle v_i,v_j\rangle=\delta_{ij}N_i$ and
$N_i\ne0$. Explicitly, for $A=\C^r$ with idempotents $e_i$, $\varphi_c=\sum_iv_i(c)e_i$,
$N_i=1/\tau(e_i)$ and $a_i=\tau(e_i)w_i$. Merge becomes $a_ib_iN_i$: Liu's
\S2--3 starting point, and the orthogonal holographic transformation to weighted
equality~\cite{Val08}.
\end{proposition}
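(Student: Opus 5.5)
We prove the two directions separately, and the main work is in the forward one (reduced $A$ gives frame sums). Assume $A$ is reduced. By \cref{f:structure}, $A\cong\C^r$ with primitive idempotents $e_1,\dots,e_r$, and $r=\dim A\le k$. Write each $\varphi_c=\sum_i v_i(c)\,e_i$; this defines vectors $v_i\in\C^k$. A Frobenius form on $\C^r$ is determined by $t_i=\tau(e_i)$. Its kernel contains no nonzero ideal, and the ideals are spanned by subsets of the $e_i$, so $t_i\ne0$ for every $i$; this is \cref{f:frobenius}.

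\textbf{Forward direction.} The dual basis of $\{e_i\}$ under $\tau$ is $\{e_i/t_i\}$. Hence $\omega_\tau=\sum_i t_i^{-1}e_i\otimes e_i$. Expanding~\eqref{eq:casimir} in the basis $e_i\otimes e_j$ gives
\[
\sum_c v_i(c)v_j(c)=\delta_{ij}/t_i ,
\]
that is, $\langle v_i,v_j\rangle=\delta_{ij}N_i$ with $N_i=1/t_i\ne0$. Now write $w=\sum_i w_ie_i$. Since the $e_i$ are orthogonal idempotents,
\[
F^w_n(c_1,\dots,c_n)=\tau\Bigl(\sum_i w_i\,v_i(c_1)\cdots v_i(c_n)\,e_i\Bigr)=\sum_i t_iw_i\prod_j v_i(c_j).
\]
So $F^w_n=\sum_i a_i v_i^{\otimes n}$ with $a_i=\tau(e_i)w_i$, as claimed. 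For the merge formula, \cref{thm:merge} says the merged gate corresponds to $wz$. The coefficient of $e_i$ in $wz$ is $w_iz_i$, so the new coefficient is $t_iw_iz_i=a_ib_i/t_i=a_ib_iN_i$. The same value also follows directly from orthogonality of the $v_i$.

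\textbf{Converse.} Suppose the gates of a datum are all of the stated frame form, with $v_1,\dots,v_r$ pairwise orthogonal and non-isotropic. This is the direction whose precise meaning needs fixing, and I expect it to be the main obstacle.

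First I would make the hypothesis precise. The natural reading is that there are fixed $v_i$, orthogonal with $N_i\ne0$, such that every gate of the datum has the form $\sum_i a_iv_i^{\otimes n}$, with $a$ ranging over a space that injects $A$. Under this reading, consider the algebra $\C^r$ with $\tau(e_i)=1/N_i$ and $\psi_c=\sum_i v_i(c)e_i$. By the forward computation run backwards, this is a Fibonacci datum whose gates are exactly the frame sums.

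Next, compare it with the original datum. The map $w\mapsto$ (its gate family) is injective on $A$, because the top values $\tau(w\varphi_c)$ determine $w$ by nondegeneracy, as in \cref{thm:dictionary}. The cross-merge of gates is multiplication in $A$ (\cref{thm:merge}). The same holds in the comparison datum, where cross-merge of frame sums is pointwise multiplication $a_ib_iN_i$. Both algebras are realized as the same set of signature families, with the same multiplication given by $\cross$. Hence $A$ embeds as a subalgebra of the reduced algebra $\C^r$, and a subalgebra of a reduced algebra is reduced, so $A$ is reduced.

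If instead the frame presentation is only given gate-by-gate, I would first apply the argument to the unit $w=1$. The gate $F^1_n$ is then a frame sum, so $\varphi$ itself has the form $\varphi_c=\sum_i v_i(c)\,\epsilon_i$. The remaining step is to show that the $\epsilon_i$ are orthogonal idempotents. I would derive this from the orthogonality relations via~\eqref{eq:casimir} and the associativity of merge.
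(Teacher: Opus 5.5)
Your proposal is correct and follows the paper's proof. In the forward direction you compare $\omega_\tau=\sum_i e_i\otimes e_i/\tau(e_i)$ with $\sum_c\varphi_c\otimes\varphi_c$ in the idempotent basis, and in the converse your comparison datum $(\C^r,\ \tau(e_i)=1/N_i,\ \psi)$ repackages the paper's injective multiplicative map $w\mapsto(a_i(w)N_i)_i$ into the reduced algebra $\C^r$; the paper also reads the hypothesis as one fixed orthogonal non-isotropic frame, so your final gate-by-gate paragraph, which is only a sketch, is not needed.
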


\begin{proof}[Proof sketch]
($\Rightarrow$) $\omega_\tau=\sum_ie_i\otimes e_i/\tau(e_i)$, while
$\sum_c\varphi_c\otimes\varphi_c=\sum_{i,j}\langle v_i,v_j\rangle e_i\otimes e_j$.
Compare the two. ($\Leftarrow$) Suppose every gate $F^w_n$ is a frame sum
$\sum_ia_i(w)v_i^{\otimes n}$ over one fixed orthogonal, non-isotropic family
$v_1,\dots,v_r$. Cross merge multiplies coefficients as $a_ib_iN_i$, so
$w\mapsto(a_i(w)N_i)_i$ is an algebra map $A\to\C^r$. It is injective: if all
$a_i(w)=0$ then $F^w_1=0$, i.e.\ $\tau(w\varphi_c)=0$ for all $c$, and $w=0$
by~\eqref{eq:casimir}. A subalgebra of $\C^r$ is reduced.
\end{proof}

\begin{example}[Collision on the Boolean domain]\label{ex:boolean}
Take $A=\C[t]/(t^2)$ with $\tau(a+bt)=a+b$, so that $\tau(1)=1$. The pair $\varphi_1=1$,
$\varphi_2=i(1-t)$ is orthonormal: $\tau(\varphi_2^2)=-\tau(1-2t)=1$ and
$\tau(\varphi_2)=0$. The gates have $f_j=i^j\tau\bigl(w(1-jt)\bigr)$, an affine function
of $j$ times $i^j$. This is the span of $(1,i)^{\otimes n}$ and its first-order
deformation. They satisfy the Fibonacci recurrence of~\cite{CLX08} with $m=2i$, namely
$f_{j+2}=2if_{j+1}+f_j$, whose characteristic polynomial $(x-i)^2$ has a double root: a
collided frame. Here $h=2t$ is nilpotent. In arity $\ge3$ these are vanishing
signatures in the sense of Cai--Guo--Williams~\cite{CGW16}, and \cref{prop:vanishing}
is an analogue of that phenomenon for arbitrary generalized Fibonacci families. Their
vanishing class is broader in one way, since it vanishes on every graph, and narrower
in another, since it is a Boolean-domain class.
\end{example}

\begin{proposition}[Non-semisimple parts vanish on graphs with $\beta\ge2$]\label{prop:vanishing}
Write $A=\prod_iA_i$ with local factors (\cref{f:structure}). On each non-field factor
$A_i$, the handle element satisfies $h_i\in\soc(A_i)$, so $h_i^2=0$. Consequently, for
connected $G$ with $\beta(G)\ge2$,
\[
\Hol(G)=\tau\Bigl(e_{\mathrm{red}}\,h^{\beta(G)}\prod_vw_v\Bigr),
\]
where $e_{\mathrm{red}}$ is the sum of the idempotents of the field factors $A_i\cong\C$.
In other words, on such graphs every generalized Fibonacci family computes the same
Holant as its frame part (\cref{prop:frames}). If $A$ is local and $A\ne\C$, the
Holant vanishes on every connected graph with $\beta\ge2$.
\end{proposition}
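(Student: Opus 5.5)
The plan is to decompose everything along the primitive idempotents and reduce to a statement about a single local Frobenius algebra. Write $A=\prod_iA_i$ with idempotents $e_i$ (\cref{f:structure}). A Frobenius form $\tau$ on $A$ restricts to a Frobenius form $\tau_i=\tau|_{A_i}$ on each factor: an ideal of $A_i$ is an ideal of $A$, so $\ker\tau_i$ contains no nonzero ideal (\cref{f:frobenius}). Since the copairing $\omega_\tau$ is basis-independent (\cref{f:casimir}), we may compute it from a basis adapted to the decomposition. Choose dual bases inside each $A_i$; their union is a pair of dual bases for $A$, because $\tau(A_iA_j)=0$ for $i\ne j$. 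Hence $\omega_\tau=\sum_i\omega_{\tau_i}$ and $h=h_\tau=\sum_i h_i$ with $h_i=h_{\tau_i}\in A_i$. By \cref{cor:holant} it therefore suffices to understand $h_i$ factor by factor.

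The key step is to show $h_i\in\soc(A_i)$ when $A_i$ is local but not a field. Write $\mathfrak m$ for the maximal ideal of $A_i$, which is nilpotent and nonzero. I will use \cref{f:casimir}(iii): $\tau_i(h_i a)=\Tr(L_a)$ for all $a\in A_i$. For $a\in\mathfrak m$, the map $L_a$ is nilpotent, so $\Tr(L_a)=0$. Thus $\tau_i(h_i\mathfrak m)=0$. The set $h_i\mathfrak m$ is an ideal contained in $\ker\tau_i$, so $h_i\mathfrak m=0$ by \cref{f:frobenius}. This says exactly that $h_i\in(0:\mathfrak m)=\soc(A_i)$.

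To get $h_i^2=0$, I need $h_i\in\mathfrak m$. Since $\soc(A_i)$ is one-dimensional (Gorenstein) and $A_i\ne\C$, the socle is contained in $\mathfrak m$: a unit $u$ with $u\mathfrak m=0$ would force $\mathfrak m=0$. Hence $h_i^2\in h_i\mathfrak m=0$. For a field factor $A_i\cong\C$, $h_i=e_i/\tau(e_i)$, which is harmless.

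Now for connected $G$ with $\beta\ge2$, \cref{cor:holant} gives $\Hol(G)=\tau(h^{\beta}W)$ with $W=\prod_vw_v$. Since $h^\beta=\sum_ih_i^\beta$ and $h_i^\beta=0$ on non-field factors (because $\beta\ge2$), we get $h^\beta=e_{\mathrm{red}}h^\beta$. This yields the displayed formula. The identification with the frame part follows by restricting the datum to $e_{\mathrm{red}}A\cong\C^r$ and invoking \cref{prop:frames}. If $A$ is local and $A\ne\C$, then $e_{\mathrm{red}}=0$ and the Holant vanishes.

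The only delicate point is the trace identity argument. The rest is bookkeeping, in particular checking that $\omega_\tau$ splits over the factors, which uses $\tau(A_iA_j)=\tau(e_ie_jA)=0$ for $i\ne j$.
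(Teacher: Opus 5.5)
Your proof is correct and follows the paper's argument: the trace identity of \cref{f:casimir}(iii) gives $\tau(h_i\,m_i)=0$, the Frobenius property then forces $h_i m_i=0$, and $h_i\in\soc(A_i)\subseteq m_i$ yields $h_i^2=0$ before \cref{cor:holant} is applied. The only differences are cosmetic: you justify the splitting $h=\sum_i h_{\tau_i}$ explicitly, and you use nilpotency of $L_a$ for $a\in m_i$ in place of the paper's formula $\Tr(L_a|_{A_i})=(\dim A_i)\,\varepsilon_i(a)$.
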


\begin{proof}[Proof sketch]
By \cref{f:casimir}(iii), for $a\in A_i$ we have $\tau(h_ia)=\Tr(L_a|_{A_i})=(\dim
A_i)\,\varepsilon_i(a)$, where $\varepsilon_i$ is the residue map. Hence $\tau(h_i\,m_ix)=0$
for all $x$, and by nondegeneracy $h_i\,m_i=0$. So $h_i$ lies in the socle, which is
contained in $m_i$ when $A_i\ne\C$. Therefore $h_i^2=0$. Apply \cref{cor:holant}.
\end{proof}

\begin{definition}
A Fibonacci datum (or generalized Fibonacci family) is \emph{frameless} if $A$ is not
smoothable. Then the family is not a flat limit of orthogonal-frame families, not
even of collided ones.
\end{definition}

\begin{theorem}[Frameless families exist exactly from domain $14$]\label{thm:frameless}
\leavevmode
\begin{enumerate}[label=(\roman*)]
\item If a Fibonacci datum on $[k]$ is frameless, then $k\ge\dim A\ge14$.
\item For every $k\ge14$ there is a frameless generalized Fibonacci family on $[k]$ in
Liu's normal form (\cref{def:liu-k}).
\item Equivalently, in tensor language: the admissible parameters $T$ of
\cref{def:liu-k}, extended by the unit, are symmetric tensors satisfying Robeva's
quadratic equations for odeco tensors on the unital locus. For $k\le13$ they all lie
in the closure of the odeco tensors, and for $k\ge14$ some do not~\cite[Cor.~2.6.5]{BDS22}.
\end{enumerate}
\end{theorem}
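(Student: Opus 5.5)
We prove (i) and (ii) by passing through the Dictionary (\cref{thm:dictionary}) and the two standard facts on smoothability, and we prove (iii) by rewriting the admissibility equations in tensor language.

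\emph{Part (i).} Let $(A,\tau,\varphi)$ be a Fibonacci datum on $[k]$. As remarked after \cref{def:datum}, condition~\eqref{eq:casimir} forces $\dim A\le k$, since the left side has tensor rank $\le k$ while $\omega_\tau$ has rank $\dim A$. By \cref{f:frobenius}, $A$ is Gorenstein. If $\dim A\le13$, then \cref{f:smooth} (the irreducibility result of~\cite{CJN15}) says $A$ is smoothable. Smoothability of a product reduces to that of its local factors, each of which is again Gorenstein of length $\le13$, so this reduction causes no trouble. Hence a frameless datum has $k\ge\dim A\ge14$.

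\emph{Part (ii).} Fix a non-smoothable local Gorenstein algebra $B$ of length $14$. It exists by~\cite{IE78}; for instance take $B=A_F$ for a general cubic form $F$ in $6$ variables, which by \cref{f:macaulay} is local Artinian Gorenstein with Hilbert function $(1,6,6,1)$. For $k\ge14$ set $A=B\times\C^{k-14}$. This $A$ is Gorenstein, and it is non-smoothable: a flat family degenerating $\C^{k}$ to $A$ would, after localizing at the points of the support, yield a smoothing of the factor $B$. This follows from the standard fact that smoothability is checked on local factors, and this is the step I would cite rather than prove. Next choose a Frobenius form $\tau$ with $\tau(1)=1$. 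Starting from any Frobenius form $\tau_0$ (nondegenerate pairing), pick $v\in A$ with $\tau_0(v\cdot v)\ne0$, which is possible because the pairing is nondegenerate symmetric. Rescale to $\tau_0(v^2)=1$. Since $v$ need not be a unit, it is cleaner to argue directly: the set of forms $\tau_0(u\,\cdot)$ with $u\in A^\times$ and $\tau(1)=\tau_0(u)\neq0$ is nonempty, because $A^\times$ is Zariski dense and $\tau_0\ne0$. Scale $u$ so that $\tau(1)=1$. Then $1$ has norm $1$, and \cref{f:orthonormal} extends it to a $\tau$-orthonormal basis with $\varphi_1=1$. Applying \cref{thm:dictionary} (b)$\Rightarrow$(a) produces an admissible $T$ whose gates form a frameless family in Liu's normal form.

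\emph{Part (iii) and the main obstacle.} Form the symmetric tensor $\widetilde T_{lmn}=\tau(\varphi_l\varphi_m\varphi_n)$ on all of $[k]$. Extending by the unit means $\widetilde T_{1mn}=\delta_{mn}$. Associativity of $A$ in an orthonormal basis reads
\[
\sum_l \widetilde T_{mnl}\widetilde T_{lpq}=\sum_l\widetilde T_{pnl}\widetilde T_{lmq},
\]
and these are exactly Robeva's quadratic equations for odeco tensors. Odeco tensors correspond to semisimple $A$ by \cref{prop:frames}. Their closure on the unital locus therefore corresponds to limits of semisimple algebras, that is, to smoothable $A$. With this identification, (iii) is (i) and (ii) restated, consistent with~\cite[Cor.~2.6.5]{BDS22}. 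The step needing the most care is exactly this identification: the closure of the odeco variety must match the flat limits of $\C^k$ together with limits of orthonormal bases. A bounded sequence of orthonormal frames could a priori degenerate, so this requires either invoking~\cite{BDS22} directly or arguing that the unital normalization keeps the structure tensors and bases in a compact-enough family.
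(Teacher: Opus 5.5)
Your proposal is correct and follows the paper's proof: the rank bound plus CJN for (i), a general $(1,6,6,1)$ algebra times $\C^{k-14}$ with a normalized Frobenius form extended to an orthonormal basis for (ii), and Robeva's equations read as associativity, with the closure identification deferred to BDS, for (iii). The only difference is cosmetic: you get $\tau(1)=1$ by a Zariski-density choice of a unit $u$ with $\tau_0(u)\neq0$, whereas the paper writes the form down explicitly as $\tau_0((1+cs)\,\cdot)$ on the local factor, with $c=15-k$ and weight $1$ on each field factor.
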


\begin{proof}[Proof sketch]
(i) $\dim A\le k$ by the rank remark after \cref{def:datum}, and $\dim A\ge14$ by
\cref{f:smooth} (CJN). (ii) Let $A_0$ be a general $(1,6,6,1)$ algebra (\cref{f:macaulay}),
non-smoothable by~\cite{IE78}. Let $s$ span the socle $(A_0)_3$, and let $\tau_0$ be
the projection onto $(A_0)_3=\C s$, zero in degrees $\le2$. For $c\in\C$,
$\tau_0((1+cs)\,\cdot)$ is a Frobenius form on $A_0$ (\cref{f:frobenius}) taking the
value $c$ at $1$. On $A=A_0\times\C^{k-14}$, which is still non-smoothable, set
$\tau=\tau_0((1+cs)\,\cdot)$ on $A_0$ and $\tau(e_i)=1$ on the field factors, with
$c=15-k$. Then $\tau$ is Frobenius and $\tau(1)=1$. Extend $1$ to a $\tau$-orthonormal
basis by \cref{f:orthonormal}. (iii) Robeva's
equations are associativity of $(x,y)\mapsto T(x,y,\cdot)$~\cite{Rob16,BDHR17}, which is
\eqref{eq:assoc} with the unit adjoined. The identification with the Gorenstein locus,
and the thresholds, are due to Biaggi--Draisma--Seynnaeve~\cite{BDS22}, who build on
CJN.
\end{proof}

\begin{remark}[Interpretation]
\Cref{thm:frameless} shows that the gate class is genuinely larger than (limits of)
holographic transforms of equality. \Cref{prop:vanishing} limits what this buys for
complexity: frameless and other non-semisimple contributions are visible only on
forests and unicyclic graphs, or in settings that leave the single family $\G$, such as
mixing with other signatures. The tensor-side phenomenon is due to
BDS~\cite{BDS22}. What is new here is its reading as Holant gates and the Holant formula.
\end{remark}

\section{Status and verification}

\begin{itemize}[leftmargin=*]
\item \textbf{Complete proofs here.} \Cref{thm:merge,thm:dictionary,thm:glc} and
\cref{cor:holant,cor:liu,prop:frames,prop:vanishing}. They use only
\cref{f:structure,f:frobenius,f:casimir,f:orthonormal}.
\item \textbf{Cited.} The thresholds in \cref{thm:frameless} rest on~\cite{CJN15,IE78}
and their translation by~\cite{BDS22}.
\item \textbf{Machine checks.} $k=3$: admissibility is $sy+xt+1=x^2+y^2$ (by hand and
symbolically). $k=4$: the Gr\"obner-basis identification of~\eqref{eq:assoc} with
Liu's (7)\,+\,(8), and the independence of~(8), as in \cref{rem:eq8}. $k=14$: an explicit
frameless family built from a random $(1,6,6,1)$ algebra. Brute-force Holant values on
small graphs match \cref{cor:holant}: relative error $\le 10^{-9}$ in double precision
on graphs with $\beta=0,1$. At $\beta\ge2$ both sides vanish numerically, as
\cref{prop:vanishing} predicts for this local algebra.
\item \textbf{For the full version.} Liu's ``reducible'' cases (a zero frame vector,
domain-separable signatures), which correspond to data where colour~$1$ is not the unit
and are covered by \cref{thm:merge}; bit complexity; non-symmetric signatures;
Holant$^*$. Since every unary signature is a gate when $\varphi$ is a basis,
Holant$^*$ over a generalized Fibonacci set is also tractable.
\end{itemize}

\end{document}